\def\@setthanks{\vspace{-\baselineskip}\def\thanks##1{\@par##1\@addpunct.}\thankses}
\newcommand{\bland}{\begin{landscape}}
\newcommand{\eland}{\end{landscape}}
\newcommand{\bburl}[1]{\textcolor{blue}{\url{#1}}}
\definecolor{maroon}{rgb}{0.5, 0.0, 0.0}
\def\maxwidth{\ifdim\Gin@nat@width>\linewidth\linewidth\else\Gin@nat@width\fi}
\def\maxheight{\ifdim\Gin@nat@height>\textheight\textheight\else\Gin@nat@height\fi}
\newcommand{\burl}[1]{\textcolor{blue}{\url{#1}}}
\numberwithin{equation}{section}
\def\section{\@startsection{section}{1}%
      \z@{.7\linespacing\@plus\linespacing}{.5\linespacing}%
      {\normalfont\Large\bfseries\centering }}
\def\sectionL{\@startsection{section}{1}%
      \z@{.7\linespacing\@plus\linespacing}{.5\linespacing}%
      {\normalfont\Large\bfseries}}
\patchcmd{\subsection}{\bfseries}{\bfseries\large}{}{}
\patchcmd{\subsubsection}{\itshape}{\bfseries}{}{}
\def\paragraph{\@startsection{paragraph}{4}%
  \z@\z@{-\fontdimen2\font}%
  {\sffamily \bfseries }}
\definecolor{codegreen}{rgb}{0,0.6,0}
\definecolor{codegray}{rgb}{0.5,0.5,0.5}
\definecolor{codepurple}{rgb}{0.58,0,0.82}
\definecolor{backcolour}{rgb}{0.95,0.95,0.92}
\lstdefinestyle{mystyle}{
    backgroundcolor=\color{backcolour},
    commentstyle=\color{codegreen},
    keywordstyle=\color{magenta},
    numberstyle=\tiny\color{codegray},
    stringstyle=\color{codepurple},
    basicstyle=\ttfamily\footnotesize,
    breakatwhitespace=false,
    breaklines=true,
    captionpos=b,
    keepspaces=true,
    numbers=left,
    numbersep=5pt,
    showspaces=false,
    showstringspaces=false,
    showtabs=false,
    tabsize=2
}
\tikzset{
    > = stealth,
    every path/.append style = {
        arrows = ->
    },
    hidden/.style = {
        draw = black,
        shape = circle
    }
}
\tikzstyle{Arrow} = [
\newcommand{\beq}{\begin{equation}}
\newcommand{\eeq}{\end{equation}}
\definecolor{ao}{rgb}{0.0, 0.5, 0.0}
\definecolor{purp}{HTML}{5601A4}
\definecolor{navy}{HTML}{0D3D56}
\definecolor{ruby}{HTML}{9a2515}
\definecolor{corn}{HTML}{107895}
\definecolor{daisy}{HTML}{EBC944}
\definecolor{coral}{HTML}{F26D21}
\definecolor{kelly}{HTML}{829356}
\definecolor{cranb}{HTML}{E64173}
\definecolor{jet}{HTML}{131516}
\definecolor{ash}{HTML}{555F61}
\definecolor{slate}{HTML}{314F4F}
\newcommand{\epsi}{\varepsilon}
\newcommand*\Bigpar[1]{\left( #1 \right )}
\newcommand*\SetB[1]{\left\{ #1 \right\}}
\newcommand{\Ubr}[2]{\underbrace{#1}_{\text{#2}}}
\newcommand{\Obr}[2]{ \overbrace{#1}^{\text{#2}}}
\newcommand{\ba}{\begin{array}}
\newcommand{\ea}{\end{array}}
\newcommand{\be}{\begin{enumerate}}
\newcommand{\ee}{\end{enumerate}}
\newcommand{\bi}{\begin{itemize}}
\newcommand{\ei}{\end{itemize}}
\newcommand{\I}{\item}
\newcommand{\bs}{\begin{align}\begin{split}\nonumber}
\newcommand{\bsnumber}{\begin{align}\begin{split}}
\newcommand{\es}{\end{split}\end{align}}
\newcommand{\yc}{\ensuremath{Y^{0}}}
\newcommandx{\Yw}[2][1=1,2=]{\ensuremath{Y^{(#1)}_{#2}}}
\newcommandx{\deriv}[2][1=x,2=f]{\nabla \, #2 \Bigpar{ #1 } }
\newcommandx{\ortho}[1][1=L]{#1^{\bot}}
\newcommandx*\seqq[3][1=1,2=x, 3=n]{#2_{#1},\ldots,#2_{#3}}
\newcommandx*\coord[3][1=1,2=x, 3=n]{(#2_{#1},\ldots,#2_{#3})}
\newcommand*\Indic[1]{\mathds{1}\{#1\}}
\renewcommand{\to}{{\rightarrow}}
\newcommand{\R}{\ensuremath{\mathbb{R}}}
\newcommand\frakfamily{\usefont{U}{yfrak}{m}{n}}
\DeclareTextFontCommand{\textfrak}{\frakfamily}
\newcommand{\ooN}{\frac{1}{n}}  
\newcommand{\defeq}{\vcentcolon=}
\newcommand{\eqdef}{=\vcentcolon}
\renewcommand{\to}{{\rightarrow}}
\def\mbi#1{\boldsymbol{#1}} 
\def\ve#1{\mbi{#1}} 
\renewcommand{\vee}[1]{\mathbf{#1}} 
\newcommand{\wh}[1]{\widehat{#1}} 
\newcommand{\wt}[1]{\widetilde{#1}} 
\renewcommand{\iff}{\Leftrightarrow}
\newcommand{\E}{\mathbb{E}} 
\newcommand{\Vv}{\mathbb{V}} 
\newcommand\indep{\protect\mathpalette{\protect\independenT}{\perp}}
\def\independenT#1#2{\mathrel{\rlap{$#1#2$}\mkern5mu{#1#2}}}
\newcommand{\Exp}[1]{\mathbb{E}\left[#1\right]}
\newcommand{\Var}[1]{\mathbb{V}\left[#1\right]}
\newcommand{\Covar}[1]{\text{Cov}\left[#1\right]}
\newcommand*\Unif[1]{\mathsf{U} \left[ #1 \right ]}
\newcommand{\hyp}[2]{
\ensuremath{H_0:#1 \ifhmode\quad\text{versus}\quad\fi\text{ vs. } H_1:#2}}
\newcommandx{\uniff}[1][1={a,b}]{\textrm{Unif}\left({#1}\right)}
\newcommandx{\unifd}[1][1={a,\ldots,b}]{\textrm{Unif}\left\{{#1}\right\}}
\newcommandx{\dunif}[3][1=x,2=a,3=b]{\frac{I(#2<#1<#3)}{#3-#2}}
\newcommandx{\dunifd}[3][1=x,2=a,3=b]{\frac{I(#2\le#1\le#3)}{#3-#2+1}}
\newcommandx{\punif}[3][1=x,2=a,3=b]{
\begin{cases} 0 & #1 < #2 \\ \frac{#1-#2}{#3-#2} & #2 < #1 < #3 \\ 1 & #1 > #3\\\end{cases}}
\newcommandx{\punifd}[3][1=x,2=a,3=b]{
\begin{cases} 0 & #1 < #2\\ \frac{\lfloor#1\rfloor-#2+1}{#3-#2} & #2 \le #1 \le #3 \\ 1 & #1 > #3\\ \end{cases}}
\newcommandx\bern[1][1=p]{\textrm{Bern}\left({#1}\right)}
\newcommandx\dbern[2][1=x,2=p]{#2^{#1} \left(1-#2\right)^{1-#1}}
\newcommandx\pbern[2][1=x,2=p]{\left(1-#2\right)^{1-#1}}
\newcommandx\bin[1][1={n,p}]{\textrm{Bin}\left(#1\right)}
\newcommandx\dbin[3][1=x,2=n,3=p]{\binom{#2}{#1}#3^#1\left(1-#3\right)^{#2-#1}}
\newcommandx\mult[1][1={n,p}]{\textrm{Mult}\left(#1\right)}
\newcommandx\dmult[3][1=x,2=n,3=p]{\frac{#2!}{#1_1!\ldots#1_k!}#3_1^{#1_1}\cdots#3_k^{#1_k}}
\newcommandx\hyper[1][1={N,m,n}]{\textrm{Hyp}\left({#1}\right)}
\newcommandx\dhyper[4][1=x,2=N,3=m,4=n]{\frac{\binom{#3}{#1}\binom{#2-#3}{#4-#1}}{\binom{#2}{#4}}}
\newcommandx\nbin[1][1={r,p}]{\textrm{NBin}\left({#1}\right)}
\newcommandx\dnbin[3][1=x,2=r,3=p]{\binom{#1+#2-1}{#2-1}#3^#2(1-#3)^#1}
\newcommandx\pnbin[3][1=x,2=r,3=p]{I_#3(#2,#1+1)}
\newcommandx\geo[1][1=p]{\textrm{Geo}\left(#1\right)}
\newcommandx\dgeo[2][1=x,2=p]{#2(1-#2)^{#1-1}}
\newcommandx\pgeo[2][1=x,2=p]{1-(1-#2)^#1}
\newcommandx\pois[1][1=\lambda]{\textrm{Po}\left({#1}\right)}
\newcommandx\dpois[2][1=x,2=\lambda]{\frac{#2^#1 e^{-#2}}{#1!}}
\newcommandx\ppois[2][1=x,2=\lambda]{e^{-#2}\sum_{i=0}^#1\frac{#2^i}{i!}}
\newcommandx\normall[1][1={\mu,\sigma^2}]{\mathcal{N}\left({#1}\right)}
\newcommandx\dnormall[3][1=x,2=\mu,3=\sigma]%
\exp \Bigpar{-\frac{\left(#1-#2\right)^2}{2 #3^2}}}
\newcommandx\pnormall[1][1=x]{\Phi\left({#1}\right)}
\newcommandx\qnormall[1]{\Phi^{-1}\left({#1}\right)}
\newcommandx\mvn[1][1={\mu,\Sigma}]{\mathrm{MVN}\left({#1}\right)}
\newcommandx\ex[1][1=\lambda]{\textrm{Exp}\left(#1\right)}
\newcommandx\dex[2][1=x,2=\lambda]{#2e^{-#1 #2}}
\newcommandx\pex[2][1=x,2=\lambda]{1-e^{-#1 #2}}
\newcommandx\gam[1][1={\alpha,\lambda}]{\textrm{Gamma}\left({#1}\right)}
\newcommandx\dgamma[3][1=x,2=\alpha,3=\lambda]%
\newcommandx\invgamma[1][1={\alpha,\beta}]{\textrm{InvGamma}\left({#1}\right)}
\newcommandx\dinvgamma[3][1=x,2=\alpha,3=\beta]%
\newcommandx\pinvgamma[3][1=x,2=\alpha,3=\beta]%
\newcommandx\bet[1][1={\alpha,\beta}]{\textrm{Beta}\left(#1\right)}
\newcommandx\dbeta[3][1=x,2=\alpha,3=\beta]
\newcommandx\dir[1][1={\alpha}]{\textrm{Dir}\left(#1\right)}
\newcommandx\ddir[3][1=x,2=\alpha]{\frac{\Gamma\left(\sum_{i=1}^k #2_i\right)}{\prod_{i=1}^k\Gamma\left(#2_i\right)}\prod_{i=1}^k #1_i^{#2_i-1}}
\newcommandx\weibull[1][1={\alpha}]{\textrm{Dir}\left(#1\right)}
\newcommandx\dweibull[3][1=x,2=\lambda,3=k]{\frac{#3}{#2}
\left(\frac{#1}{#2}\right)^{#3-1} e^{-(#1/#2)^k}}
\newcommandx\chisq[1][1=k]{\chi_{#1}^2}
\newcommandx\zet[1][1=s]{\textrm{Zeta}\left(#1\right)}
\newcommandx\dzeta[2][1=x,2=s]{\frac{#1^{-#2}}{\zeta\left(#2\right)}}
    \newtheoremstyle{mystyle}
      {}
      {}
      {}
      {}
      {\sffamily \bfseries }
      {.}
      {\newline }
      {\thmname{#1}\thmnumber{ #2}\thmnote{ (#3)}}
    \theoremstyle{mystyle}
    \newtheorem{thm}{Theorem}[section]
    \newtheorem{defi}[thm]{Defn}
    \newtheorem{prop}[thm]{Proposition}
    \renewenvironment{proof}{\noindent{\bf Proof}\hspace*{1em}}{\qed\bigskip\\}
    \newenvironment{proof-sketch}{\noindent{\bf Sketch of Proof}
      \hspace*{1em}}{\qed\bigskip\\}
    \newenvironment{proof-idea}{\noindent{\bf Proof Idea}
      \hspace*{1em}}{\qed\bigskip\\}
    \newenvironment{proof-of-lemma}[1][{}]{\noindent{\bf Proof of Lemma {#1}}
      \hspace*{1em}}{\qed\bigskip\\}
    \newenvironment{proof-of-proposition}[1][{}]{\noindent{\bf
        Proof of Proposition {#1}}
      \hspace*{1em}}{\qed\bigskip\\}
    \newenvironment{proof-of-theorem}[1][{}]{\noindent{\bf Proof of Theorem {#1}}
      \hspace*{1em}}{\qed\bigskip\\}
    \newenvironment{inner-proof}{\noindent{\bf Proof}\hspace{1em}}{
      $\bigtriangledown$\medskip\\}
\providecommand\@dotsep{5}
\def\listtodoname{List of Todos}
\def\listoftodos{\@starttoc{tdo}\listtodoname}
\title[Ranking Reversals in Linear Regression]{Does Regression Produce Representative Causal Rankings?}
\author{Apoorva Lal}
\address{Netflix}
\date{\today}
\begin{document}
\maketitle
\begin{abstract}

We examine the challenges in ranking multiple treatments
based on their estimated effects when using linear regression or its
popular double-machine-learning variant, the Partially Linear
Model (PLM), in the presence of treatment effect heterogeneity. We
demonstrate by example that overlap-weighting performed by linear
models like PLM can produce Weighted Average Treatment Effects (WATE)
that have rankings that are inconsistent with the rankings of the
underlying Average Treatment Effects (ATE). We define this as ranking
reversals and derive a necessary and sufficient condition for ranking
reversals under the PLM. We conclude with several simulation studies
conditions under which ranking reversals occur.
\end{abstract}

\section{Introduction}

In both the public and private sector, ranking treatments based on
their causal effects is crucial for decision-making.  In commercial
applications, it is common to rank user actions by estimating their
effect on a target metric, and subsequently seeking to encourage
actions with large estimated effects, which are deemed `high value'.
An increasingly popular approach is to use Partially Linear Models
(PLM) to flexibly condition on a large set of confounders as part of
estimating causal effects of treatments while relaxing the stringent
form assumptions
\parencite{Chernozhukov2018-fl}. This estimator is rooted in the
seminal Frisch-Waugh-Lovell theorem and is extremely popular in
practice, and is viewed as \emph{the} Double Machine Learning (DML)
estimator by applied users\footnote{This is not strictly correct,
since DML is in fact a recipe for constructing Neyman-orthogonal
estimators for a wide variety of causal and structural parameters.
However, due to the prevalence of conditional-ignorability-based
identification assumptions and the popularity of linear regression,
the PLM has become synonymous with DML. \textcite{Chernozhukov2022-se}
study Neyman-orthogonal estimators for a wide variety of causal and
structural parameters.}.

However, under treatment effect heterogeneity, it is well known from
that linear regression performs overlap-weighting. As a result, it is is biased for the Average Treatment Effect (ATE), but instead estimates a
conditional-variance Weighted average of treatment effects (WATE).
So, when unbiased estimation of treatment effects is the goal, practitioners opt for direct estimation methods such as IPW (Inverse Propensity Weighting) or its Augmented variety (AIPW), or regression imputation / g-modelling. However, in many cases, practioners seek to rank treatment effects instead, and performance of common estimators for ranking purposes is less well-understood.  We first construct an example with two treatments where the ranking of Weighted Average Treatment Effects (WATEs) produced by the PLM is the opposite of the true ranking of underlying Average treatment Effects (ATEs), which we formalize as a `ranking reversal' property that is undesirable for downstream decision-making.  This implies that decision-makers that seek to rank treatments based on the treatment effects may therefore form incorrect rankings if they use PLM coefficients to form these rankings.
We then derive a decomposition relating the WATE and ATE, which gives rise to a necessary and sufficient condition for ranking
reversals, and provide economic intuition for it. We
find that ranking reversals require substantial treatment effect
heterogeneity and covariances between regression weights and treatment
effects to be of opposite signs across the treatments being ranked. We conclude with an array of simulation designs that mimic realistic DGPs that comport with our theoretical findings about the likelihood of rank reversals under different heterogeneity patterns.

\section{A Simple Numerical Example}

Consider a binary covariate $x \sim \text{Bernoulli}(0.5)$ and two
binary treatments $W_1, W_2$ with the following propensity scores:

\begin{table}[h]
\centering
\begin{tabular}{c|cc}
 & $W_1 = 0$ & $W_2 = 1$ \\
\hline
$X = 0$ & 0.01 & 0.5 \\
$X = 1$ & 0.5 & 0.01
\end{tabular}
\end{table}

The true treatment effects are:

\begin{table}[h]
\centering
\begin{tabular}{c|cc}
 & $\tau_1$ & $\tau_2$ \\
\hline
$X = 0$ & -3 & -2 \\
$X = 1$ & 3 & 3 \\
\hline
ATE & 0 & 0.5
\end{tabular}
\end{table}

With linear propensity scores, we can plug the above two sets of
numbers into \ref{eqn:watedecomp} and \ref{eqn:cvwt} to construct PLM
regression coefficients

\begin{align*}
\tilde{\tau}_1 & = \frac{-3 \cdot 0.01 \cdot 0.99 + 3 \cdot 0.5 \cdot
  0.5}{ 0.01 \cdot 0.99 + 0.5 \cdot 0.5 } = 2.7714 \\
\tilde{\tau}_2 & = \frac{-2 \cdot 0.5 \cdot 0.5 + 3 \cdot 0.01 \cdot 0.99}{0.01 \cdot 0.99 + 0.5 \cdot 0.5} = -1.8095
\end{align*}

\begin{figure}
    \centering
    \includegraphics[width=0.5\linewidth]{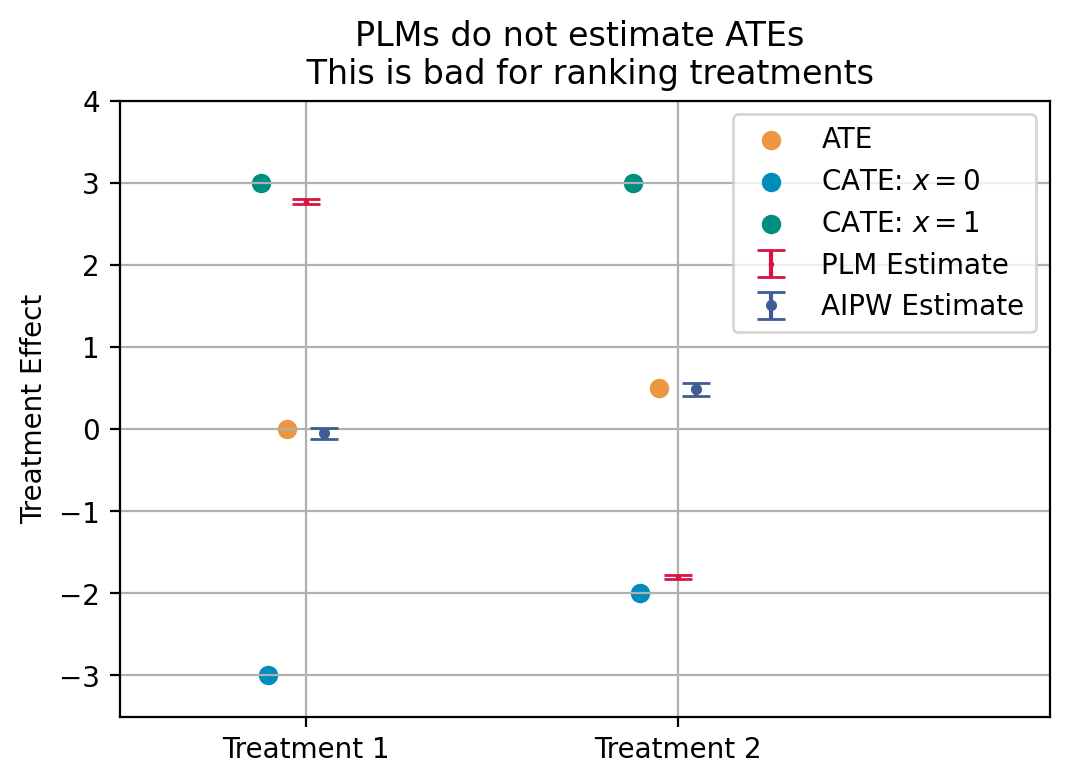}
    \caption{Strata-level and overall true effects, and estimated effects from PLM and AIPW}
    \label{fig:basicsim}
\end{figure}

In contrast, IPW or AIPW correctly recovers the ATEs. These results
demonstrate that PLM leads to incorrect ranking of treatments, while
AIPW provides the correct ranking based on ATEs. This is an admittedly
contrived example; in the next section, we formalize the properties of
this example that yielded the poor ranking performance of PLM.

\section{Methodology}

We consider a setting with multiple binary treatments where for each
unit $i$, we observe an outcome $Y_i \in \mathbb{R}$, treatment
assignment $W_i \in \{1,...,K\}$ indicating which of $K$ treatments
was received (with $W_i = 0$ denoting control), and pre-treatment
covariates $\mathbf{X}_i \in \mathbb{R}^d$. Our goal is to rank
treatments according to their average treatment effects relative to
control, defined as $\tau_j \defeq \mathbb{E}[Y_i(j) - Y_i(0)]$ for
each treatment $j$. We seek to form a poset ordering $(\leq, \ve{\tau}_j)$, and want to estimate $\ve{\tau}_j$s using standard techniques under selection-on-observables assumptions [Unconfoundedness and Overlap \parencite{Imbens2004-ir}].


\begin{defi}[Partially Linear Model]

For each treatment $W_i$, the PLM approach models the outcome as:

\beq
    Y_i = \tau W_i + g(\vee{X}_i) + \varepsilon_i
\eeq

Estimation typically involves a residuals-on-residuals regression:

\beq
    Y_i - \E[Y_i | \vee{X}_i] = \wh{\tau} (W_i - \E[W_i | \vee{X}_i])
    + \eta_i \label{eqn:ror}
\eeq

Where the conditional expectations $\Exp{Y \mid \vee{X}} \eqdef \mu(\vee{X})$ and $\Exp{W
\mid \vee{X}} \eqdef p(\vee{X})$ are estimated using flexible non-parametric regression
methods and cross-fit to avoid over-fitting to satisfy the technical
requirements in Chernozhukov et al (2018).

\end{defi}

\begin{thm}[Conditional Variance weighting property of linear
regression]\label{thm:cvwt}

Under treatment effect heterogeneity, PLM estimates a weighted average treatment effect:

$$
\wh{\tau} = \frac{\Exp{\omega_i \tau_i}}{\Exp{\omega_i}}
$$

where $\omega_i \defeq (W_i - \Exp{W_i \mid X_i})^2$
\parencite{Angrist1998-ok,Angrist1999-sp,Aronow2016-nn}.Defining normalized weights
$\gamma_i = \omega_i/\Exp{\omega_i}$ and working
(without loss of generality) with discrete $\vee{X}$ lets us rewrite
the above as

\beq
    \text{plim } \hat{\tau} = \E[\gamma(\vee{X})\tau(\vee{X})] \eqdef \text{WATE} \label{eqn:watedecomp}
\eeq

where $\gamma(\vee{X})$ are (normalized) weights that depend on the
propensity scores. The weights take the following form

\beq
\gamma(\vee{X}) = \frac{\Vv[D \mid \vee{X}]}{\E[\Vv[D \mid \vee{X}]]}
=
\frac{p(\vee{X}) (1-p(\vee{X}))}{\E[p(\vee{X})(1-p(\vee{X}))]}
\label{eqn:cvwt}
\eeq

where the second equality uses the fact that each treatment is binary
and substitutes in the expression for binomial variance. Proof in
\ref{sec:p1}.
\end{thm}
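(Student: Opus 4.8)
The plan is to first reduce the cross-fitted estimator to its population (oracle) analogue, and then to evaluate that population object directly using the normal equation of the residuals-on-residuals regression. The Neyman-orthogonality of the PLM score $\psi = (Y - \mu(\vee{X}) - \tau(W - p(\vee{X})))(W - p(\vee{X}))$ with respect to the nuisances $\mu(\vee{X}) = \Exp{Y \mid \vee{X}}$ and $p(\vee{X}) = \Exp{W \mid \vee{X}}$ is precisely what underlies the cross-fitting results of Chernozhukov et al.\ (2018); invoking those results, $\wh{\tau}$ in \eqref{eqn:ror} has the same probability limit as the infeasible estimator that plugs in the true $\mu$ and $p$. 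So it suffices to compute $\plim \wh{\tau} = \cov(\tilde Y, \tilde W)/\var(\tilde W)$, where $\tilde Y \defeq Y - \mu(\vee{X})$ and $\tilde W \defeq W - p(\vee{X})$ --- equivalently, Frisch--Waugh--Lovell applied in the population.

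The denominator is immediate: since $\Exp{\tilde W} = \Exp{W - p(\vee{X})} = 0$ by iterated expectations, $\var(\tilde W) = \Exp{\tilde W^2} = \Exp{(W - p(\vee{X}))^2} = \Exp{\omega_i}$.

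The numerator is the substantive step. Writing $\tau_i \defeq Y_i(1) - Y_i(0)$ and using the binary-treatment identity $Y_i = Y_i(0) + W_i \tau_i$, unconfoundedness $W_i \indep (Y_i(0), Y_i(1)) \mid \vee{X}_i$ gives $\mu(\vee{X}) = \Exp{Y(0) \mid \vee{X}} + p(\vee{X})\Exp{\tau \mid \vee{X}}$, hence
\[
\tilde Y = \big(Y(0) - \Exp{Y(0)\mid\vee{X}}\big) + \big(W\tau - p(\vee{X})\Exp{\tau\mid\vee{X}}\big).
\]
I would then multiply by $\tilde W = W - p(\vee{X})$ and take $\Exp{\cdot \mid \vee{X}}$: the term in $Y(0) - \Exp{Y(0)\mid\vee{X}}$ vanishes because it is conditionally mean-zero and, by unconfoundedness, conditionally independent of $W - p(\vee{X})$; the term $-p(\vee{X})\Exp{\tau\mid\vee{X}}(W - p(\vee{X}))$ vanishes because $\Exp{W - p(\vee{X}) \mid \vee{X}} = 0$; and, again by unconfoundedness together with $W^2 = W$,
\[
\Exp{W\tau(W - p(\vee{X})) \mid \vee{X}} = \Exp{\tau\mid\vee{X}}\,\Exp{W(W - p(\vee{X}))\mid\vee{X}} = \Exp{\tau\mid\vee{X}}\,p(\vee{X})(1 - p(\vee{X})).
\]
Thus $\Exp{\tilde Y\tilde W \mid \vee{X}} = \Exp{\tau\mid\vee{X}}\,p(\vee{X})(1-p(\vee{X})) = \Exp{\omega_i\tau_i \mid \vee{X}}$, the last equality because $\Exp{\omega_i\tau_i\mid\vee{X}} = \Exp{(W - p(\vee{X}))^2 \mid \vee{X}}\,\Exp{\tau\mid\vee{X}}$ under unconfoundedness. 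Taking a further expectation, $\cov(\tilde Y,\tilde W) = \Exp{\tilde Y\tilde W} = \Exp{\omega_i\tau_i}$, which combined with the denominator yields $\plim \wh{\tau} = \Exp{\omega_i\tau_i}/\Exp{\omega_i}$.

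Finally, for the weighted-average form \eqref{eqn:watedecomp}--\eqref{eqn:cvwt} with discrete $\vee{X}$, I would condition on $\vee{X}$ in numerator and denominator: $\Exp{\omega_i\tau_i} = \sum_{\vee{x}} \Prob{\vee{X} = \vee{x}}\,\var(W\mid\vee{X} = \vee{x})\,\tau(\vee{x})$ and $\Exp{\omega_i} = \sum_{\vee{x}}\Prob{\vee{X} = \vee{x}}\,\var(W\mid\vee{X} = \vee{x})$, with $\var(W\mid\vee{X}=\vee{x}) = p(\vee{x})(1-p(\vee{x}))$ for binary $W$; dividing gives $\plim\wh{\tau} = \Exp{\gamma(\vee{X})\tau(\vee{X})}$ with $\gamma$ as in \eqref{eqn:cvwt}. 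The main obstacle is the first step --- cleanly invoking Neyman-orthogonality and the cross-fitting machinery of Chernozhukov et al.\ (2018) to justify replacing the estimated nuisances by the truth --- together with keeping the conditional-independence bookkeeping in the cross-product term exact; the remaining manipulations are routine second-moment algebra.
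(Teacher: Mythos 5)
Your proof is correct, and it rests on the same computational engine as the paper's own argument: residualize $W$ on $\vee{X}$, substitute the potential-outcome decomposition $Y = Y(0) + W\tau$, and use unconfoundedness together with the conditional mean-zero property of the residual to annihilate the cross terms, leaving the conditional-variance weights on $\tau$. The route differs in where the computation lives. The paper works at the sample level: it posits a basis $\phi_i$ with the explicit auxiliary assumption that the propensity score is linear in that basis ($\E[W_i \mid X_i] = \phi_i'\psi$), applies Frisch--Waugh--Lovell to the finite-sample OLS coefficient, and sends each term of the resulting ratio to its probability limit separately. You instead evaluate the population functional $\cov(\tilde{Y},\tilde{W})/\var(\tilde{W})$ at the true nonparametric nuisances, justifying the reduction from the cross-fitted estimator by appeal to Neyman orthogonality and the Chernozhukov et al.\ (2018) machinery. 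Your version buys generality --- it needs no linearity-of-propensity assumption and is more faithful to the PLM as actually implemented with flexible first stages --- and it also tightens a step the paper treats loosely: the paper dismisses the term $\sum_i \wt{W}_i \tau_i \phi_i'\psi / \sum_i \wt{W}_i^2$ ``by orthogonality,'' but sample orthogonality of $\wt{W}$ and $\phi$ does not by itself kill a sum weighted by $\tau_i$; your conditional-expectation bookkeeping (conditioning on $\vee{X}$ so that $\E[W - p(\vee{X}) \mid \vee{X}] = 0$ and invoking conditional independence of $\tau$ and $W$ given $\vee{X}$) is the correct way to make that step airtight. What the paper's elementary route buys in exchange is self-containedness: it never has to invoke the DML asymptotic theory as a black box, which is the one step of your argument that is asserted rather than proved.
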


This means that in the presence of treatment effect heterogeneity
(i.e. $\tau(\vee{X})$ is not a constant function $=\tau$), the
probability limit of the regression coefficient is no longer the
Average Treatment Effect (ATE $\defeq \E[\tau(\vee{X})$) but is
instead the above Weighted Average Treatment Effect (WATE), with
weights $\gamma$ implicitly chosen by the regression specification.
These weights are largest for propensity scores close to 0.5, which
results in OLS performing `overlap-weighting' where it down-weights
strata with extreme propensity scores, and discards strata with no
overlap (with propensity scores equal to 0 or 1).

An interesting alternative but complementary decomposition is studied
by \textcite{SloczynskiUnknown-kg}, who shows that the regression
coefficient $\hat{\tau}$ can also be decomposed into the ATT (Average
Treatment Effect on the Treated) and ATU (Average Treatment Effect on
the Untreated), with weights that are inversely proportional to group
sizes. In other words, the larger the share of the treated group, the
lower weight it receives, and vice versa.

\subsection{Rank Reversal: definition and conditions}

With these weights in hand, we can define the property observed in the
previous section.

\begin{defi}[Rank Reversal]\label{defn:rrev}
For any two treatments $j$ and $k$, a
ranking reversal implies that we have $\text{ATE}_j > \text{ATE}_k$
but $\text{WATE}_j < \text{WATE}_k$. This occurs when

\begin{align}
\Obr{\Exp{\tau_j(\vee{X})}}{ATE$_j$} & >
\Obr{\Exp{\tau_k(\vee{X})}}{ATE$_k$} \label{eqn:atejgk} \\
\Ubr{\Exp{\gamma_j(\vee{X}) \tau_j(\vee{X})}}{WATE$_j$} & < \Ubr{\Exp{\gamma_k(\vee{X})\tau_k(\vee{X})}}{WATE$_k$} \label{eqn:wagejlk}
\end{align}

\end{defi}

We first derive an expression relating the ATE and WATE.  For any
treatment $g$, we can decompose the WATE using the definition of
covariance ($\Covar{a, b} = \Exp{ab} -
\Exp{a}\Exp{b}$)

$$\mathbb{E}[\gamma_g(\mathbf{X})\tau_g(\mathbf{X})] =
\mathbb{E}[\gamma_g(\mathbf{X})]\mathbb{E}[\tau_g(\mathbf{X})] +
\text{Cov}(\tau_g(\mathbf{X}), \gamma_g(\mathbf{X})) 
$$

Note that by construction of regression weights  $\gamma_g(\mathbf{X}) \defeq  \Var{W\mid\vee{X}}/\Exp{\Var{W \mid \vee{X}}}$ have an expected value of 1. So, we arrive at the following decomposition

\beq
\Ubr{\mathbb{E}[\gamma_g(\mathbf{X})\tau_g(\mathbf{X})]}{WATE$_g$}
 =
 \Ubr{\mathbb{E}[\tau_g(\mathbf{X})]}{ATE$_g$} +
 \text{Cov}(\tau_g(\mathbf{X}), \gamma_g(\mathbf{X})) \label{eqn:watedecomp}
\eeq

We provide three simple examples numerically illustrating the above decomposition with negative, zero, and positive covariance between the regression weights and treatment functions in figure~\ref{fig:treatment-effects-sim}. 

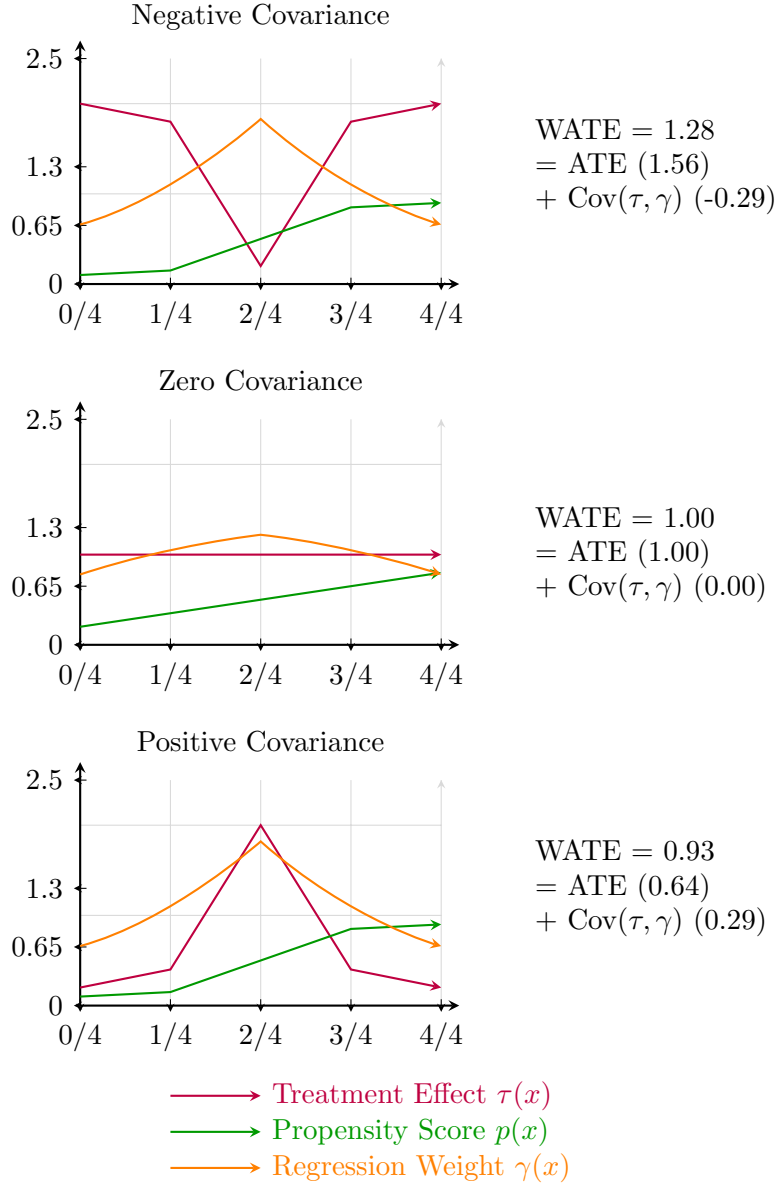
\begin{figure}[t]
    \centering
    \begin{tikzpicture}[scale=1.2]
        \tikzset{
            every node/.style={font=\small},
            estimate/.style={align=left, text width=3.5cm}
        }
        
        \foreach \i in {0,1,2} {
            \begin{scope}[yshift=-4*\i cm]
                \draw[gray!30] (0,0) grid[step=1] (4,2.5);
                \draw[->,thick] (0,0) -- (4.2,0); 
                \draw[->,thick] (0,0) -- (0,2.7) ;
                
                \foreach \x in {0,1,2,3,4} {
                    \draw (\x,2pt) -- (\x,-2pt) node[below] {\x/4};
                }
                \foreach \y in {0,0.65,1.3,2.5} {
                    \draw (2pt,\y) -- (-2pt,\y) node[left] {\y};
                }
            \end{scope}
        }
        
        \node[above] at (2,2.7) {Negative Covariance};
        \node[estimate] at (6.5,1.3) {
            WATE = 1.28 \\
            = ATE (1.56) \\
            + Cov($\tau,\gamma$) (-0.29)
        };
        
        \node[above] at (2,-1.3) {Zero Covariance};
        \node[estimate] at (6.5,-3) {
            WATE = 1.00 \\
            = ATE (1.00) \\
            + Cov($\tau,\gamma$) (0.00)
        };
        
        \node[above] at (2,-5.3) {Positive Covariance};
        \node[estimate] at (6.5,-6.7) {
            WATE = 0.93 \\
            = ATE (0.64) \\
            + Cov($\tau,\gamma$) (0.29)
        };
        
        \draw[purple, thick] (0,2) -- (1,1.8) -- (2,0.2) -- (3,1.8) -- (4,2);
        \draw[green!60!black, thick] (0,0.1) -- (1,0.15) -- (2,0.5) -- (3,0.85) -- (4,0.9);
        \draw[orange, thick] (0,0.66) .. controls (1,0.93) and (2,1.83) .. (2,1.83) .. controls (3,0.93) and (4,0.66) .. (4,0.66);
        
        \begin{scope}[yshift=-4cm]
            \draw[purple, thick] (0,1) -- (4,1);
            \draw[green!60!black, thick] (0,0.2) -- (1,0.35) -- (2,0.5) -- (3,0.65) -- (4,0.8);
            \draw[orange, thick] (0,0.78) .. controls (1,1.11) and (2,1.22) .. (2,1.22) .. controls (3,1.11) and (4,0.78) .. (4,0.78);
        \end{scope}
        
        \begin{scope}[yshift=-8cm]
            \draw[purple, thick] (0,0.2) -- (1,0.4) -- (2,2.0) -- (3,0.4) -- (4,0.2);
            \draw[green!60!black, thick] (0,0.1) -- (1,0.15) -- (2,0.5) -- (3,0.85) -- (4,0.9);
            \draw[orange, thick] (0,0.66) .. controls (1,0.93) and (2,1.82) .. (2,1.82) .. controls (3,0.93) and (4,0.66) .. (4,0.66);
        \end{scope}
        
        \begin{scope}[yshift=-9.0cm]
            \draw[purple, thick] (1,0) -- (2,0) node[right] {Treatment Effect $\tau(x)$};
            \draw[green!60!black, thick] (1,-0.4) -- (2,-0.4) node[right] {Propensity Score $p(x)$};
            \draw[orange, thick] (1,-0.8) -- (2,-0.8) node[right] {Regression Weight $\gamma(x)$};
        \end{scope}
        
    \end{tikzpicture}
    \caption{Treatment effect heterogeneity and regression weights under negative, zero, and positive scenarios for the $\Covar{\tau_g(\vee{X}), \gamma_g(\vee{X}}$ term in \ref{eqn:watedecomp}. We have a single covariate $X$ with $5$ discrete strata with equal probability, and vary propensity scores and treatment effects according to the green and red functions specified above, which gives rise to the orange regression weights function. The right panel for each scenario shows how the weighted average treatment effect (WATE) estimated using regression decomposed into the true average treatment effect (ATE) and the covariance between treatment effects and regression weights.}
    \label{fig:treatment-effects-sim}
\end{figure}

This decomposition immediately illustrates how rank reversals may arise in practice: when the second term in \ref{eqn:watedecomp} is large enough to offset the first, rank-reversals may occur.


\begin{prop}[Necessary and Sufficient Condition for Rank Reversal]

The following condition yields rank-reversal between treatments $j$ and $k$

\beq
\Exp{\tau_j(\vee{X}} + \Covar{\tau_j(\vee{X}, \gamma_j(\vee{X}} < \Exp{\tau_k(\vee{X})} + \Covar{\tau_k(X), \gamma_k(\vee{X}}
\label{eqn:cond}
\eeq

This is an immediate implication of the decomposition \ref{eqn:watedecomp}. Proof in \ref{appdx:directproof}. We also provide slightly more transparent sufficient conditions that parametrises the magnitudes of the two covariances in \ref{eqn:cond} in appdx~\ref{appdx:interpretablesuff}.

\end{prop}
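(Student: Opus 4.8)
The plan is to read \ref{eqn:cond} straight off the exact WATE--ATE identity \ref{eqn:watedecomp}, and then to note that because that identity is an equality the implication reverses, which is what makes the condition necessary as well as sufficient; so the work is almost entirely bookkeeping.

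First I would recall the ingredients already assembled in the text. By Theorem~\ref{thm:cvwt}, for each treatment $g \in \{j,k\}$ the PLM probability limit is $\mathrm{WATE}_g = \Exp{\gamma_g(\vee{X})\tau_g(\vee{X})}$, and the normalized weights have mean one, $\Exp{\gamma_g(\vee{X})} = 1$, since $\gamma_g(\vee{X}) = \V{W\mid\vee{X}}/\Exp{\V{W\mid\vee{X}}}$. Applying $\Covar{a,b} = \Exp{ab}-\Exp{a}\Exp{b}$ with $a=\tau_g(\vee{X})$, $b=\gamma_g(\vee{X})$ and using $\Exp{\gamma_g(\vee{X})}=1$ gives
\[
\mathrm{WATE}_g \;=\; \mathrm{ATE}_g + \Covar{\tau_g(\vee{X}),\gamma_g(\vee{X})}, \qquad g\in\{j,k\}.
\]
Then I would substitute this identity, for $g=j$ and $g=k$, into the WATE inequality $\mathrm{WATE}_j < \mathrm{WATE}_k$ from the definition of rank reversal (\ref{eqn:wagejlk} in Definition~\ref{defn:rrev}) and subtract: this is equivalent to $\mathrm{ATE}_j + \Covar{\tau_j(\vee{X}),\gamma_j(\vee{X})} < \mathrm{ATE}_k + \Covar{\tau_k(\vee{X}),\gamma_k(\vee{X})}$, i.e.\ exactly \ref{eqn:cond}. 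Because each step is an equality, the chain is an equivalence, so \ref{eqn:cond} holds if and only if $\mathrm{WATE}_j < \mathrm{WATE}_k$; combined with the maintained labeling $\mathrm{ATE}_j > \mathrm{ATE}_k$ of \ref{eqn:atejgk}, \ref{eqn:cond} is then both necessary and sufficient for a rank reversal between $j$ and $k$.

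There is no genuine obstacle here — the author is right that it is immediate from \ref{eqn:watedecomp} — so the only point I would be careful to state precisely is what is being held fixed: \ref{eqn:cond} characterizes the event $\mathrm{WATE}_j<\mathrm{WATE}_k$ unconditionally, and characterizes a \emph{rank reversal} only once the true ordering $\mathrm{ATE}_j>\mathrm{ATE}_k$ is imposed (without it, \ref{eqn:cond} could merely reflect $\mathrm{ATE}_j<\mathrm{ATE}_k$ with WATEs that agree with the ATEs). I would also note that nothing in the argument uses discreteness of $\vee{X}$ beyond what \ref{eqn:watedecomp} already assumes, so the characterization transfers verbatim to continuous covariates, and that in the knife-edge case $\mathrm{ATE}_j=\mathrm{ATE}_k$ a WATE reversal occurs iff $\Covar{\tau_j(\vee{X}),\gamma_j(\vee{X})} < \Covar{\tau_k(\vee{X}),\gamma_k(\vee{X})}$.
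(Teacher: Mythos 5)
Your proposal is correct and follows essentially the same route as the paper's own proof in \ref{appdx:directproof}: both arguments simply observe that, by the decomposition \ref{eqn:watedecomp}, the two sides of \ref{eqn:cond} \emph{are} $\mathrm{WATE}_j$ and $\mathrm{WATE}_k$, so the condition is equivalent to the WATE inequality in Definition~\ref{defn:rrev} once the labeling $\mathrm{ATE}_j > \mathrm{ATE}_k$ is maintained. Your explicit remark that \ref{eqn:cond} only characterizes a \emph{reversal} (rather than a mere WATE ordering) under the maintained ATE ordering is a point the paper makes implicitly but states less carefully.
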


\paragraph{When can we expect PLM coefficients to yield correct
rankings?}

\be
\I Constant treatment effects ($\tau(\vee{X}) = \tau$): Here, PLM,
IPW, and AIPW all estimate the same quantity. This is rare in practice
but serves as a useful benchmark.

\I Uncorrelated weights and effects
($\Covar{\gamma(\vee{X}),\tau(\vee{X})} \approx 0$): This can happen
when:
\be
   \I Treatment assignment is relatively balanced ($p(\vee{X}) \approx 0.5$)
   \I Treatment effects vary independently of variables that predict
   treatment
   \I As-good-as-random assignment: if units don't have the
   opportunity to sort into treatment based on private information
   about their own treatment effects $\tau(\vee{X})$, this covariance
   will be more likely to be small.
\ee
\I Uniform selection on gains: If units sort into treatments $j$ and
$k$ based on private information about their expected gains
$\tau_j(\vee{x}), \tau_k(\vee{x})$, the covariance
$\Covar{\gamma_g(\vee{X}), \tau_g(\vee{X})}$ will be of the same sign
for $g \in \SetB{j,k}$, which would not flip the rankings between the ATEs.
\I Similar propensity score distributions: When $p_j(\vee{X})$ and
$p_k(\vee{X})$ have similar distributions, $\gamma_j(\vee{X})$ and
$\gamma_k(\vee{X})$ will be similar, reducing the chance of rank
reversals. This suggests observational studies with very different
propensity scores across treatments are more prone to rank reversals
\I Moderate treatment effect heterogeneity: If heterogeneity in
treatment effects is modest, and this is known to agents, it is less
likely that they actively seek or avoid treatments (which pushes
$p_g(\vee{X})$ towards 0 or 1) based on this information, which
weakens the magnitude of $\Covar{\tau(\cdot), \gamma(\cdot)}$, which
in turn makes it less likely that the covariances for different
treatments are of contrasting signs to result in rank reversals.
\ee

A practical implication of the above is that when treatment effects
are suspected to be highly heterogeneous with units selecting into
treatments, researchers should prefer AIPW over PLM for ranking.

\begin{defi}[Augmented Inverse-Propensity Weighting (AIPW) Estimators]

An alternative to the PLM that does not fall prey to the ranking
reversal property is the AIPW estimator, which involves construction
of a `pseudo-outcome' $\Gamma_i^{j}$ that is the estimated potential outcome under treatment $j$ \parencite{Cattaneo2010-oc,Chernozhukov2018-fl}

\begin{align*}
\wh{\Gamma}_i^{j} = \wh{\mu}^{j,-k}(\vee{X}_i) +
&
\frac{\Indic{W_i = j}}{\wh{p}^{j,-k_i}(\vee{X}_i)} \Bigpar{Y_i - \wh{\mu}^{j,-k_i}(\vee{X}_i)}  \\
\wh{\tau}^{\text{AIPW}, a, b} = \ooN \sum_{i}^n\Bigpar{\wh{\Gamma}_i^{a} -  \wh{\Gamma}_i^{b} }
\end{align*}

where we first partition data by assigning each observation into $k_i
\in \Unif{K}$ folds, and cross-fit nuisance functions
$\wh{\mu}(\cdot)$ (an outcome regression within treatment level $j$) and $\wh{p}$ (a multi-class propensity score that models the probability of treatment level $j$) so that their predictions for unit $i$ are produced from models that were not trained on the $k_i-$th
fold. The above estimator is consistent for the ATE regardless of the
level of heterogeneity in the underlying treatment effect function
$\tau(\vee{X})$, which implies that it does not exhibit rank-reversal
properties, but conversely may have poor empirical performance in the
presence of extreme propensity scores.

\end{defi}

\section{Numerical Experiments}

\subsection{Simulation Design}

We conduct Monte Carlo simulations to evaluate the performance of PLM
and AIPW estimators under various data generating processes (DGPs).
Each DGP is characterized by:

\begin{itemize}
\item A binary covariate $X \sim \text{Bernoulli}(0.5)$
\item Two binary treatments $W_1, W_2$ with stratum-specific
propensity scores $p_j(X)$
\item Heterogeneous treatment effects $\tau_j(X)$ for each treatment
\end{itemize}

We consider five scenarios that vary in their degree of effect
heterogeneity and propensity score distributions:

\begin{enumerate}
\item \textbf{Extreme Heterogeneity}: Large differences in treatment
effects across strata with extreme propensity scores
\item \textbf{Constant Effects}: Homogeneous effects within treatments
but different across treatments
\item \textbf{Uncorrelated}: Moderate heterogeneity with balanced
propensity scores
\item \textbf{Selection on Gains}: Treatment probability correlated
with treatment effects
\item \textbf{Balanced}: Equal propensity scores across strata with
heterogeneous effects
\end{enumerate}

For each scenario, we simulate 1,000 datasets with 10,000 observations
each. We evaluate the estimators on three dimensions:
\begin{itemize}
\item Distribution of point estimates
\item Bias relative to true effects
\item Proportion of correct rankings between treatments
\end{itemize}

We report figures for each of these settings in appendix
\ref{appdx:simfigs}. We find that with the exception of the extreme
heterogeneity setting that expands upon the example in section 2 (fig
\ref{fig:extremehet}), the rankings produced by the PLM are largely
consistent with the AIPW estimator, and conform with the sufficient
conditions derived in the previous section.


\section{Conclusion}

This note highlights the importance of using appropriate methods for
estimating and ranking treatment effects in the presence of
heterogeneity. We show using an example that commonly used Partially
Linear Models can lead to biased estimates and incorrect rankings. We
then define a notion of ranking reversals and derive a decomposition relating the WATE and ATE, which gives rise to a necessary and sufficient
condition for ranking reversals in linear regression. Finally, we
propose interpretations for these conditions and recommend using
Augmented Inverse Probability Weighting estimator as a general
solution for ranking in the presence of substantial heterogeneity.

Our findings have important implications for decision-making in
various fields, including digital platforms and policy evaluation,
where accurate ranking of treatments is crucial. Future work could
explore the performance of these methods in more complex settings with
multiple treatments and high-dimensional covariates.

\renewcommand{\mkbibnamefamily}[1]{\textsc{#1}}
\printbibliography

@ARTICLE{Angrist1998-ok,
  title        = {Estimating the Labor Market Impact of Voluntary Military
                  Service Using Social Security Data on Military Applicants},
  author       = {Angrist, Joshua D},
  journaltitle = {Econometrica: journal of the Econometric Society},
  publisher    = {[Wiley, Econometric Society]},
  volume       = {66},
  issue        = {2},
  pages        = {249--288},
  date         = {1998},
  url          = {http://www.jstor.org/stable/2998558}
}

@ARTICLE{SloczynskiUnknown-kg,
  title   = {Interpreting {OLS} Estimands when Treatment Effects are
             Heterogeneous},
  date = {2022},
  journaltitle={Review of Economics and Statistics},
  author  = {Słoczyński, Tymon},
  urldate = {2024-05-01}
}

@ARTICLE{Chernozhukov2022-se,
  title        = {Automatic Debiased Machine Learning of Causal and Structural
                  Effects},
  author       = {Chernozhukov, Victor and Newey, Whitney K and Singh, Rahul},
  journaltitle = {Econometrica: journal of the Econometric Society},
  publisher    = {John Wiley \& Sons, Ltd},
  volume       = {90},
  issue        = {3},
  pages        = {967--1027},
  date         = {2022-05-01},
  url          = {https://doi.org/10.3982/ECTA18515},
  note         = {https://doi.org/10.3982/ECTA18515}
}

@ARTICLE{Imbens2004-ir,
  title        = {Nonparametric Estimation of Average Treatment Effects Under
                  Exogeneity: A Review},
  author       = {Imbens, Guido W},
  journaltitle = {The review of economics and statistics},
  publisher    = {MIT Press},
  volume       = {86},
  issue        = {1},
  pages        = {4--29},
  date         = {2004-02-01},
  url          = {https://doi.org/10.1162/003465304323023651},
  note         = {doi: 10.1162/003465304323023651}
}

@ARTICLE{Cattaneo2010-oc,
  title        = {Efficient semiparametric estimation of multi-valued treatment
                  effects under ignorability},
  author       = {Cattaneo, Matias D},
  journaltitle = {Journal of econometrics},
  volume       = {155},
  issue        = {2},
  pages        = {138--154},
  date         = {2010-04-01},
  url          = {https://www.sciencedirect.com/science/article/pii/S030440760900236X}
}

@ARTICLE{Aronow2016-nn,
  title        = {Does Regression Produce Representative Estimates of Causal
                  Effects?},
  author       = {Aronow, Peter M and Samii, Cyrus},
  journaltitle = {American journal of political science},
  volume       = {60},
  issue        = {1},
  pages        = {250--267},
  date         = {2016-01-28},
  url          = {https://onlinelibrary.wiley.com/doi/abs/10.1111/ajps.12185}
}

@INBOOK{Angrist1999-sp,
  title     = {Chapter 23 - Empirical Strategies in Labor Economics},
  author    = {Angrist, Joshua D and Krueger, Alan B},
  editor    = {Ashenfelter, Orley C and Card, David},
  booktitle = {Handbook of Labor Economics},
  publisher = {Elsevier},
  volume    = {3},
  pages     = {1277--1366},
  date      = {1999-01-01},
  url       = {http://www.sciencedirect.com/science/article/pii/S1573446399030047}
}

@ARTICLE{Chernozhukov2018-fl,
  title        = {Double/debiased machine learning for treatment and structural
                  parameters},
  author       = {Chernozhukov, Victor and Chetverikov, Denis and Demirer, Mert
                  and Duflo, Esther and Hansen, Christian and Newey, Whitney and
                  Robins, James},
  journaltitle = {The econometrics journal},
  volume       = {21},
  issue        = {1},
  pages        = {C1--C68},
  date         = {2018-02-16},
  url          = {http://doi.wiley.com/10.1111/ectj.12097}
}

\pagebreak

\appendix

\section{Proofs}

\subsection{Conditional Variance Weighting}\label{sec:p1}

We observe $(Y_i, W_i, \vee{X}_i)_{i=1}^N \in \R \times \{0, 1\} \times
\R^d$. We project the covariate vector $X_i$ into some basis $\Phi$, which approximates the flexible function $g(\vee{X})$.

\be
    \I Unconfoundedness: $Y_i^{0}, Y_i^{1} \indep W_i \mid X_i$
    \I Linearity of propensitye score $\E [W_i \mid X_i] = \phi_i'
    \psi$
\ee

Define $Z_i = (1 : W_i : \phi_i)$. We run the following regression
$$Y_i \sim Z_i = \alpha + \tau W_i + \Ubr{\phi_i' \zeta}{$g(x)$} +
\epsi_i
$$

By FWL, we can write the coefficient $\hat{\tau}$ as

\begin{align*}
    \hat{\tau} &= \frac{\sum_i \wt{W}_i Y_i}{\sum_i \wt{W}_i^2} &
   \wt{W}_i = W_i - \phi_i' \psi, \; \wh{\psi} = (\phi' \phi)^{-1}
   \phi' w \\ & = \frac{\sum_i \wt{W}_i \Bigpar{\yc_i + \tau_i
   W_i}}{\sum_i \wt{W}_i^2} =
   \frac{\sum_i \wt{W}_i \yc_i}{\sum_i \wt{W}_i^2} + \frac{\sum_i
   \wt{W}_i \tau_i W_i}{\sum_i \wt{W}_i^2} \\
   \\
   &= \Ubr{\frac{\sum_i \wt{W}_i \yc_i}{\sum_i \wt{W}_i^2}}{$\to 0$ by
   A1} +
   \Ubr{\frac{\sum_i \wt{W}_i \tau_i \phi_i' \psi}{\sum_i
   \wt{W}_i}}{$\to 0$ by orthogonality bw $\wt{W}_i$ and
   $\phi_i'\psi$} +
   \frac{\sum_i \wt{W}_i^2 \tau_i}{\sum_i \wt{W}_i^2} & \text{Expand
   out } W_i = \phi_i ' \psi + \wt{W}_i \\ &= \frac{\sum_i \wt{W}_i^2
   \tau_i}{\sum_i \wt{W}_i^2} =
   \frac{\sum_i (W_i - \phi_i'\psi)^2 \tau_i}{\sum_i (W_i -
   \phi_i'\psi)^2}
\end{align*}

\begin{proof}[Proof of necessity and sufficiency of \ref{eqn:cond} for rank reversal]\label{appdx:directproof}

We need it to be the case that \ref{eqn:cond}, combined with the definitional assumption that ATE$_j >$ ATE$_k$ ($\Exp[\tau_j(\vee{X})] > \Exp{\tau_k(\vee{X})}$) $\iff$ rank reversal WATE$_j < $ WATE$_k$. 

$\Leftarrow$ Using the decomposition \ref{eqn:watedecomp}, we note that LHS of \ref{eqn:cond} is equal to WATE$_j$, and the right hand side is WATE$_k$, so this is rank reversal by definition.

$\Rightarrow$ By the same token, since the LHS and RHS of \ref{eqn:cond} are the definition of WATE$_j$ and WATE$_k$ respectively by \ref{eqn:watedecomp}, this immediately implies the conclusion.

\end{proof}

\subsection{Interpretable Sufficient Conditions}\label{appdx:interpretablesuff}

\be
\I $\text{Cov}(\tau_j(\mathbf{X}), \gamma_j(\mathbf{X})) < -\delta$ for some $\delta > 0$
\I $\text{Cov}(\tau_k(\mathbf{X}), \gamma_k(\mathbf{X})) > \delta$
\I The difference in ATEs is smaller than the combined covariance in
effects: $\E[\tau_j(\vee{X})] - \E[\tau_k(\vee{X})]<2 \delta$
\ee

We proceed by showing that conditions (1)-(3) together imply rank
reversal as defined in Definition \ref{defn:rrev}. We need to show
that for treatment effect functions $\tau_j(\vee{X}),
\tau_k(\vee{X})$ that satisfy \ref{eqn:atejgk} and conditions (1-3),
\ref{eqn:wagejlk} holds.

Next, use this definition for $j$ and $k$ and plug in conditions (1)
and (2)
\begin{align*}
\mathbb{E}[\gamma_j(\mathbf{X})\tau_j(\mathbf{X})] &= \mathbb{E}[\tau_j(\mathbf{X})] + \text{Cov}(\tau_j(\mathbf{X}), \gamma_j(\mathbf{X})) \\
&< \mathbb{E}[\tau_j(\mathbf{X})] - \delta & \text{condition (1)} \\
\mathbb{E}[\gamma_k(\mathbf{X})\tau_k(\mathbf{X})] &=
\mathbb{E}[\tau_k(\mathbf{X})] + \text{Cov}(\tau_k(\mathbf{X}),
\gamma_k(\mathbf{X})) \\ &> \mathbb{E}[\tau_k(\mathbf{X})] + \delta &
\text{condition (2)}
\end{align*}

From condition (3): $\mathbb{E}[\tau_j(\mathbf{X})] -
\mathbb{E}[\tau_k(\mathbf{X})] < 2\delta$. Therefore:
\begin{align*}
\mathbb{E}[\gamma_j(\mathbf{X})\tau_j(\mathbf{X})] &< \mathbb{E}[\tau_j(\mathbf{X})] - \delta \\
&< \mathbb{E}[\tau_k(\mathbf{X})] + \delta & \text{plug in cond (3)} \\
&< \mathbb{E}[\gamma_k(\mathbf{X})\tau_k(\mathbf{X})]
& \square
\end{align*}

\pagebreak

\subsection{Simulation Study Results}\label{appdx:simfigs}

\begin{figure}
\includegraphics[width=\textwidth]{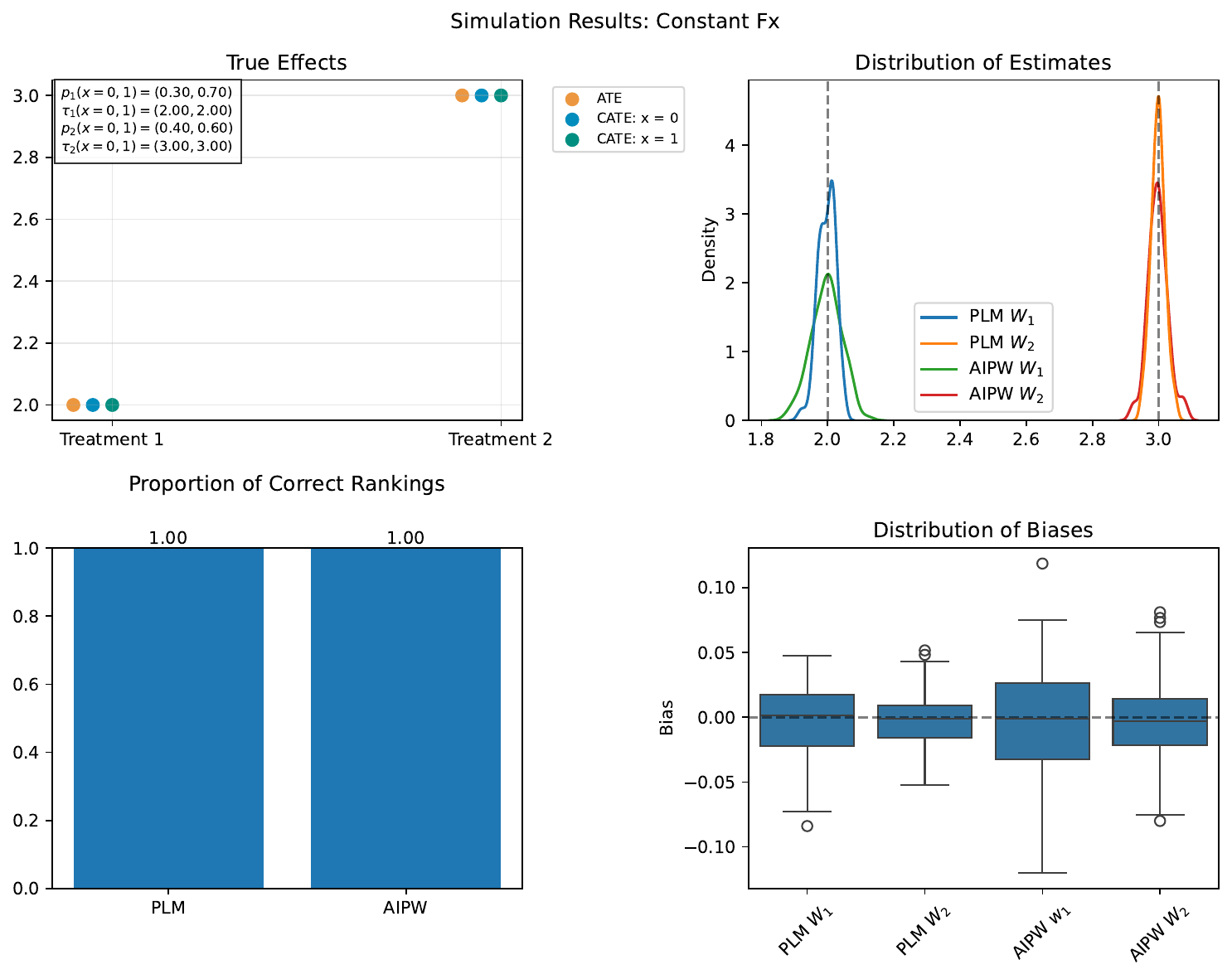}
\caption{Results for Constant Effects}
\label{fig:constfx}
\end{figure}

\begin{figure}
\includegraphics[width=\textwidth]{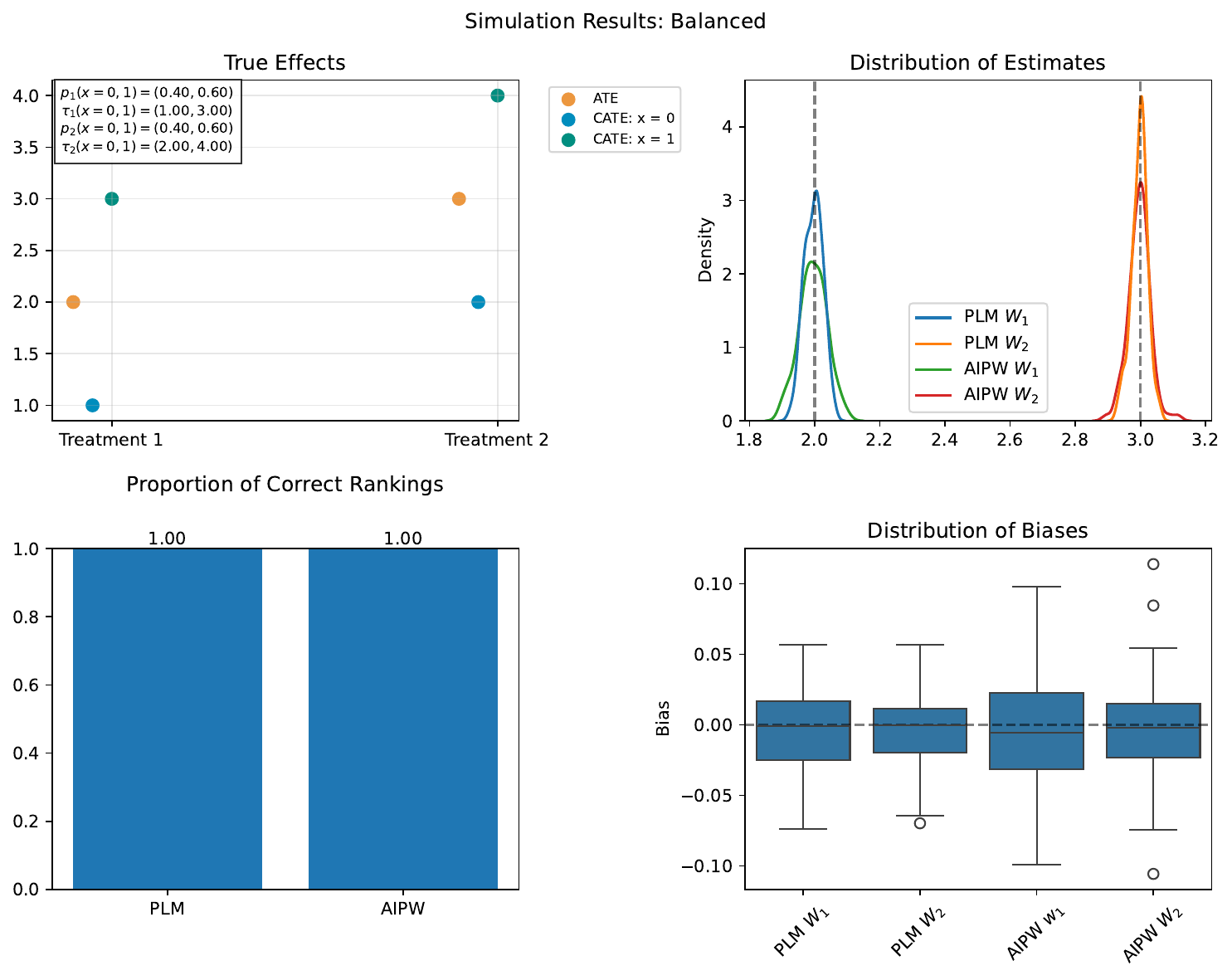}
\caption{results for balanced assignment}
\label{fig:bal}
\end{figure}

\begin{figure}
\includegraphics[width=\textwidth]{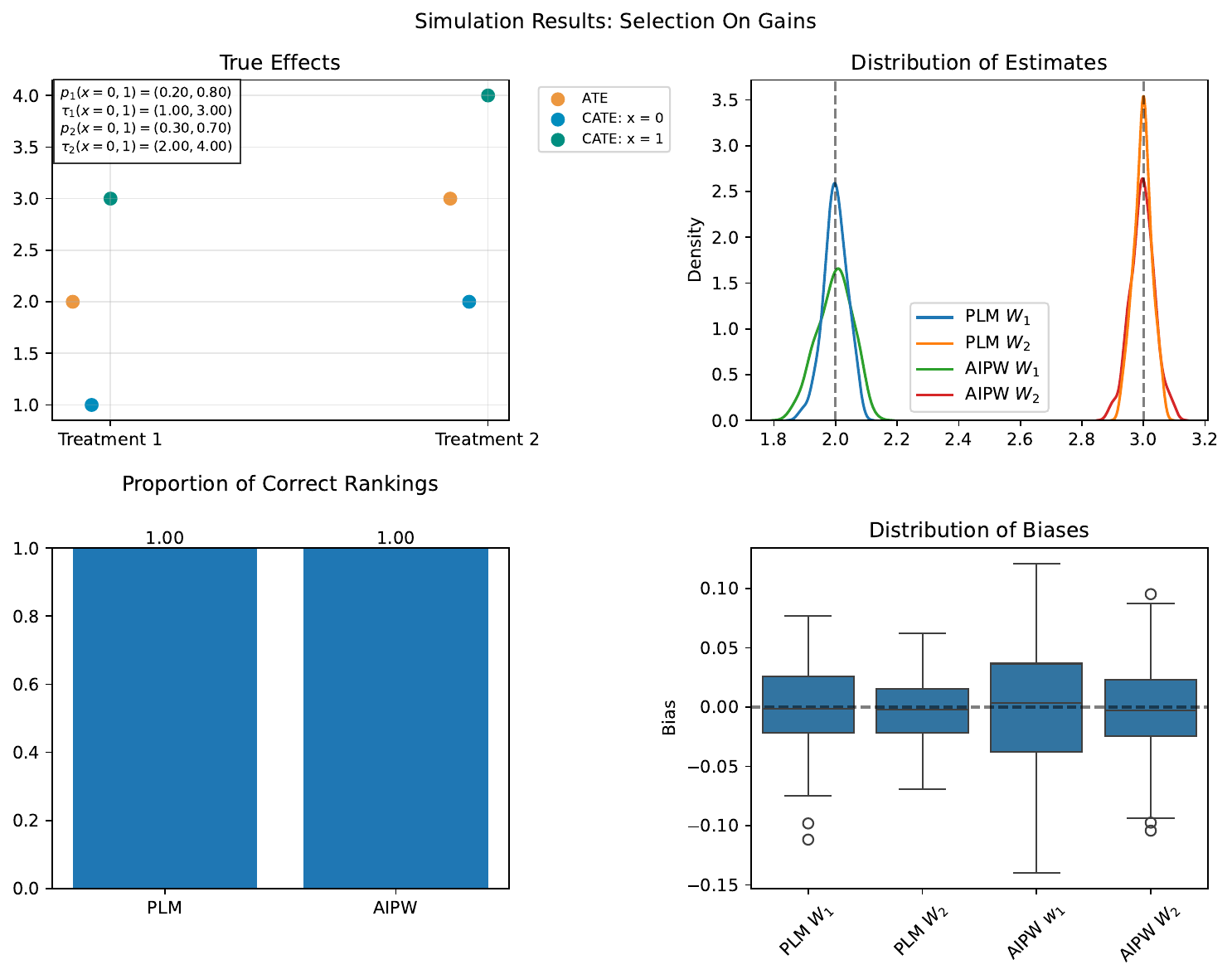}
\caption{Results for Selection on Gains}
\label{fig:selongains}
\end{figure}

\begin{figure}
\includegraphics[width=\textwidth]{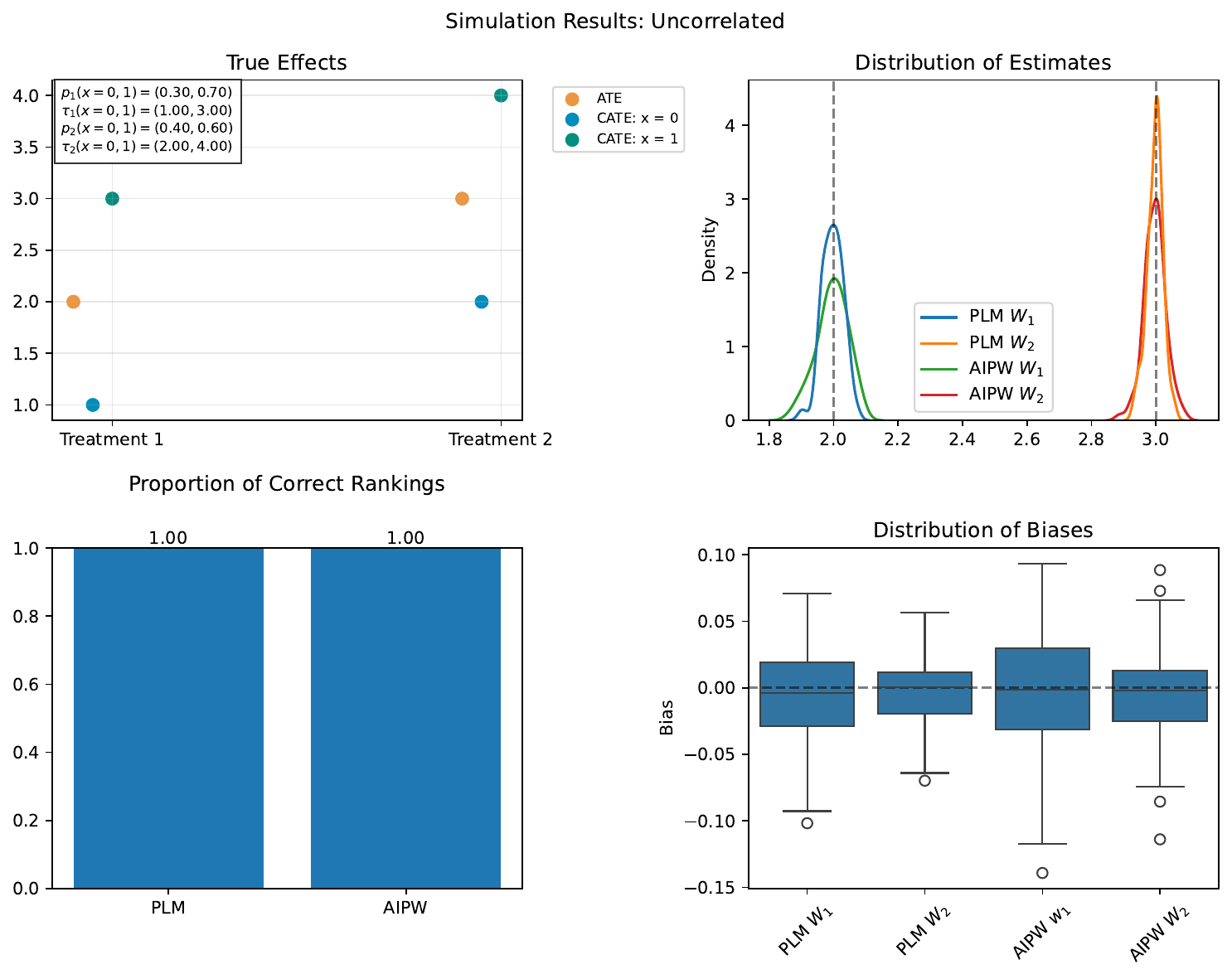}
\caption{Results for Uncorrelated propensity and treatment effects}
\label{fig:selongains}
\end{figure}

\begin{figure}
\includegraphics[width=\textwidth]{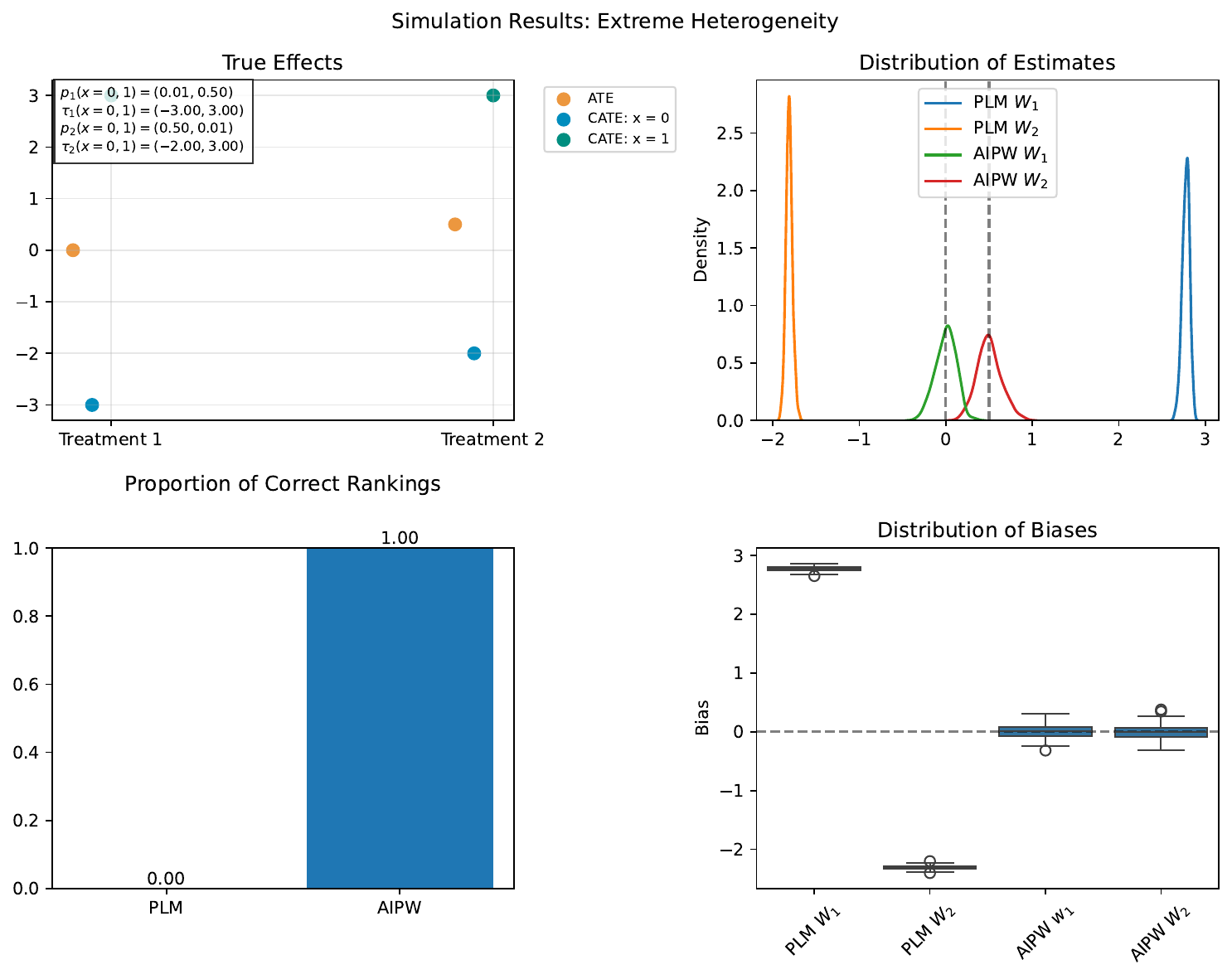}
\caption{Results for Extreme heterogeneity}
\label{fig:extremehet}
\end{figure}

\end{document}